\documentclass[11pt]{article}

\linespread{1.3}
\usepackage{amssymb,amsmath}
\usepackage{enumerate}
\usepackage{ctable,subfig}
\usepackage{makecell}
\usepackage{mathrsfs}
\usepackage{verbatim}
\usepackage{nicematrix}
\usepackage{setspace}
\usepackage[linesnumbered,ruled]{algorithm2e}

\usepackage{graphicx}
\usepackage[normalem]{ulem}
\usepackage[left=2cm,top=2cm,right=2cm,bottom=2cm]{geometry}

\newtheorem{proposition}{Proposition}

\newtheorem{remark}{Remark}
\newtheorem{example}{Example}

\begin{document}
	%\linenumbers
	\title{\bf Lower and upper bounds of the superposition of renewal processes and extensions}
	\author{{Shaomin Wu}\\
		\small{\it Kent Business School, University of Kent, Canterbury, Kent CT2 7FS, UK}}
	\date{}
	\maketitle
%	\linenumbers

	\begin{center}{\bf Abstract}\end{center}
	Consider a system consisting of multiple sockets into each of which a component is inserted. If a component fails, it is replaced immediately and system operation resumes. Then the failure process of the system is the superposition of renewal processes (or superimposed renewal process, SRP). If the label of the components that cause the system to fail are unknown but the times between failures are known, we refer to such data as {\it masked failure data}. To estimate the SRP based on masked failure data is challenging. 

    This paper obtains the lower and upper bounds of the rate of the SRP when only masked failure data are available. If repair (rather than replacement) is conducted on failed components, the failure process of the system is the superposition of generalized renewal processes (SGRPs). The paper then derives the lower and upper bounds of the rate of SGRPs and proposes to use a weighted linear combination of the bounds to approximate the SGRP. Discussions are provided for possible extensions of the bounds for systems with other structures such as parallel systems. An algorithm for simulating the SGRP is then proposed. Numerical examples are used to illustrate the proposed approximation method.
	
	\noindent{\it Keywords}: Stochastic processes;  non-homogeneous Poisson process (NHPP); generalized renewal process (GRP); superimposed renewal process (SRP); superposition of generalized renewal processes (SGRP).
	%\end{abstract}
	
	\date{}
	\maketitle
	\section{Introduction}
\subsection{Motivation}
Consider a system consisting of multiple sockets into each of which a component is inserted. If a component fails, it is replaced immediately and the system operation resumes. The time on replacement is negligible. Then the failure process of the system is the superposition of renewal processes (or superimposed renewal process, SRP). To estimate the parameters in the SRP based on failure data, we need both failure data (times-between-failures) of each component and to know which components, i.e., the labels of the components, cause the system to fail. Assuming that the times between failures are available but which components cause the system to fail are unknown, then the failure data are referred to as {\it masked failure data}. It is not possible to obtain the SRP for the system if only masked failure data are available. Under such a scenario, one can only develop methods to approximate the SRP. 

It is known that most of the real-world technical systems are multi-component systems and are structured in series. It is often the case that only masked failure data from the fields such as warranty claim data are available for model estimation. Additionally, SRP has many real applications, including queueing systems in the healthcare sector \cite{he2020diffusion}, failure data modelling in reliability engineering \cite{wu2019failure,wu2020two}, and traffic analysis \cite{singh2017stochastic}. Hence, it is  important to obtain the SRP or develop a method to approximate the SRP. As such, in the literature, there are many authors studying the SRP \cite{cox1954superposition,khinchin1956poisson,drenick1960failure}, as reviewed in Section \ref{PriorWork}.

In reliability engineering, the term {\it repair} is a concept more widely used than {\it replacement}: a repair is usually assumed to bring a failed component to one of the three possible states:  {\it perfect}, {\it imperfect}, and {\it minimal}. An {\it perfect} repair functions the same as a replacement and restores the system to the as-good-as new status. A {\it minimal repair} implies that a failed item is restored to the status immediately before the occurrence of the failure or restores the system to the {\it as-bad-as-old} status. An {\it imperfect} repair restores the failed component to a status between the as-good-as-new status and the as-bad-as-old status. To model different levels of repair, stochastic processes are used. Usually, the renewal process (RP) is used to model the failure process of a system with perfect repair, the non-homogeneous Poisson process (NHPP) models that of a system with minimal repair, and a stochastic process that can model all the three situations is referred to as the generalized renewal process (GRP). Many GRP models have been developed in the literature, for example, the arithmetic reduction of age (ARA) model \cite{doyen2004classes}, the arithmetic reduction of intensity (ARI) model \cite{doyen2004classes}, the geometric process and its extensions \cite{wu2022double}. 

We may relax the assumption of the replacement in the SRP: We assume that a failed component is repaired instead of replaced, which is more close to the real practice as repair covers replacement. If the failure process of each component in the system is modelled by the GRP, then the failure process of the system is the superposition of generalized renewal processes (SGRP).

In the literature, there is little research on the SGRP. Obviously, the SRP is a special case of the SGRP. This paper therefore aims to obtain the lower and upper bounds of the rate of the SRP, extend the results to the SGRP situation, and then use a weighted linear combination of the bounds as an approximation of the SGRP.

It should be highlighted that the results of this paper can be utilized not only in the reliability engineering but also in other areas such as inventory management and neuro-physiology.
\subsection{Prior work and comments}
\label{PriorWork}
There are some work on the SRP. Cox and Smith \cite{cox1954superposition} proved that the observed failure times of an SRP tend to be distributed with a homogeneous Poisson process when the number of components goes to infinity and the time is far from the origin. Khinchin \cite{khinchin1956poisson} further clarified that the SRP
tends to be a non-homogeneous Poisson process. Drenick \cite{drenick1960failure} proved the same property as that of \cite{cox1954superposition} holds even if the failure processes of the multi-sockets follow heterogeneous renewal processes.  
 
In the applications of the SRP, \cite{kallen2011modelling} uses the SRP to model the effect of imperfect maintenance. \cite{bratiychuk2003application} applies the SRP to study batch arrival queues.

Some methods have been developed to approximate the SRP.  \cite{whitt1982approximating} proposes a comprehensive description of two basic methods for approximating the SRP: the stationary-interval method and the asymptotic method. Both methods determine the approximating renewal process by identifying moments for intervals between successive points and fitting a convenient distribution to the moments.
 \cite{albin1984approximating} proposes a hybrid method that combines two basic methods described by \cite{whitt1982approximating} and that approximate the complex superposition process by a renewal process. \cite{tortorella1996life} proposes a likelihood inference with a pooled discrete renewal process model. \cite{torab2001approximate} minimizes the mean-squared rate error, i.e., the difference between the rate of the superposition process and that of its renewal model. \cite{peixoto2009estimating} proposes a likelihood based method with random assignments of failure times onto components. \cite{zhang2017estimating} enumerate all possible patterns of missing component labels and propose a likelihood based method to estimate the component-wise renewal distribution under the assumption that all renewal distributions are identical among components. \cite{yamamoto2020exponential} extend the method proposed in \cite{zhang2017estimating} for the cases of heterogeneous components.  \cite{li2021estimating} proposes a nonparametric procedure to estimate the inter-occurrence time distribution by properly deconvoluting the renewal equation with the empirical renewal function. 
 
 It can be seen that most of the existing research focuses on the SRP. Little research investigates the SGRP. To the best of our knowledge, there is only one paper on the SGRP: \cite{wu2020two} proposes two methods to approximate the SGRP but does not provide the bounds of the rate of the SGRP. Although \cite{wu2017two,wu2019failure} develop methods to appropriate the GRP and then applies their appropriate methods the SGRP, neither of those papers provides a rigorous derivation process in obtaining their approximation methods. As such, this paper aims to bridge this gap.
\subsection{Proposed ideas}
\label{Ideas}
This paper aims to derive the lower and upper bounds of the rate of the SRP and  then extend the result to the case of the SGRP. The underlying ideas of the derivation can be interpreted as follows: the lower bound is the rate of the SGRP that repairs on failed components are assumed to be always performed on the oldest (or the less reliable) component in the system and the upper bound is the rate of the SGRP that repairs are assumed to be always performed on the youngest (or the most reliable) component in the system. 

An approximation method is then proposed based on the linear combination of the lower and upper bounds. We then propose a method to simulate the SGRP.
\subsection{Overview}
The remainder of the paper is structured as follows. Section \ref{sec:assumptions} lists assumptions and notations. Section \ref{sec:bounds} derives the lower and upper bounds. Section \ref{sec:approximation} proposes an approximation method that combines the lower and upper bounds and an algorithm to simulate the failure process of the SGRP. Section \ref{sec:SimulationExamples} offers numerical examples to illustrate the simulation algorithm proposed in Section \ref{sec:approximation}. Section \ref{sec:conclusions} concludes the paper.
\section{Notations and assumptions}
\label{sec:assumptions}
This section sets notations, as shown in Table \ref{notations}, and assumptions, both of which will be used in this paper.

The notation table is shown in \ref{notations}.
\begin{table}[!ht]
\caption{Notations} 
\begin{tabular}{rp{14.5cm}} \hline
	$n$ & number of components in a series system;\\
	$i$ & index: number of components;\\
	$j$ & index: order of memory in some models, e.g., an ARA (arithmetic reduction of age) or ARI (arithmetic reduction of intensity) model; \\
	$k$ & index: number of failures (or repairs) of the system or a component;\\
	$t$ & time since the system starts; the components in the system at $t=0$ are new; \\
	$\mathscr{H}_{i,t-}$&  history of failures of component $i$ up to time $t$;\\
    $\mathscr{H}_{s,t-}$&  history of failures of the system up to time $t$;\\
	$T_i^{(k)}$  & time when the $k$th failure of component $i$ occurs;\\
	$T_s^{(k)}$  & time when the $k$th failure of the system occurs;\\
	$t^{(k)}$  & observations of $T^{(k)}$;\\
	$N_{i,t}$, $N_i(t)$ & number of failures that  component $i$ has experienced up to time $t$;\\
	$N_{s,t}$, $N_s(t)$ & number of failures that the system has experienced up to time $t$;\\
	$\lambda_s(t)$ & failure intensity function of the system at time $t$; \\  
	$\lambda_s^{(k)}(t)$ & failure intensity function of the system that has experienced $k$ failures to $t$; \\  	
	$\widetilde{\lambda}_s(t)$ & approximation function to  $\lambda_s(t)$; \\    
	$\lambda_t^{(k)}$, $\lambda^{(k)}(t)$ & failure intensity function of a component at $t$ and the system has experienced $k$ failures; \\
	$\lambda(t)$ & failure intensity function of a component at time $t$ before the first failure; Note $\lambda(t)=\lambda^{(0)}(t)$\\
	$\lambda_{i,t}^{(k)}$, $\lambda_i^{(k)}(t)$ & failure intensity function of  component $i$ at $t$ and the system has experienced $k$ failures. \\ \hline
	\label{notations}
\end{tabular}
\end{table}

The paper uses the following assumptions.
\begin{description}
	\item[A1.] Suppose a series system is composed of $n$ identical components whose failures are statistically independent.
	\item[A2.] The failure rate or the initial failure intensity of a component is $\lambda(t)$ before its first failure, where $\lambda(t)$ increases in $t$.
	\item[A3.] Once a failure occurs, it is  immediately repaired. The time on the repair is negligible.
	\item[A4.] The repair effectiveness may be depicted by a model such as the virtual age model with fixed parameters, perfect, imperfect, or minimal (PIM). For example, if a virtual age model is used, we can assume that $V_k=V_{k-1}+A X_k$, where $V_k$ is the virtual age of the item after the $k$th repair, $X_k$ is the lifetime of the item after the $k$th repair and $A$ is the degree of the repair. 
\end{description}
\section{Lower and upper bounds of the rate of a SGRP}
\label{sec:bounds}
\begin{comment}
Assume that the survival function (i.e., reliability) of each component in a system is given, then  \cite{lawrance1973dependency} obtains the survival function of the system
\begin{linenomath*}
	\begin{equation}
		R_s(t)=\lim_{i \rightarrow \infty}P(X_i \ge t)=\left(\sum_{i_0=1}^n \prod_{\substack{i_1=1,\dots,n\\i_1 \neq i_0}}\Psi_{i_1}(0)\right)^{-1} \sum_{i_0=1}^n \left( R_{i_0}(t)\prod_{\substack{i_1=1,\dots,n\\i_1 \neq i_0}} \Psi_{i_1}(t)\right),
		\label{Eq:Lawrance}
	\end{equation}
\end{linenomath*}
where $n$ is the number of renewal processes, $R_i(t)$ is the survival function of the $i$th renewal process, $\Psi_i(t)$ is defined as $\Psi_i(t)=\int_t^{\infty} R_i(u)du$.

The model in \eqref{Eq:Lawrance} is only useful if $R_i(u)$ are available. For the case of masked failure data, $R_i(u)$ are unavailable and hence the model is not applicable.
\end{comment}
Let $\{T_i^{(k)}: k=1,2,...\}$ be the successive failure times of component $i$ (hence $T_i^{(k+1)} > T_i^{(k)}$), starting from $T_{i,0}=0$, and $N_{i,t}$ be the number of failures up to time $t$. Let $\mathscr{H}_{i,t-}$ denote the history of the failure process up to $t$ (exclusive of $t$). The failure process of the component can be defined equivalently by the random processes $\{T_i^{(k)}\}_{k \ge 1}$ or $\{N_{i,t}\}_{t \ge 0}$ (i.e., $\{N_{i}(t)\}_{t \ge 0}$) and is characterised by the intensity function,
\begin{equation}
	\lambda_i^{(N_{i,t})}(t|\mathscr{H}_{i,t-})=\lim_{\Delta t \downarrow 0} \frac{P\{N_i(t+\Delta t)-N_i(t) = 1 |\mathscr{H}_{i,t-}\}}{\Delta t},
	\label{eq:Intensity}
\end{equation}
where $P\{N_i(t+\Delta t)-N_i(t) = 1 |\mathscr{H}_{i,t-}\}$ is the probability that component $i$ fails within the interval $(t, t+\Delta t)$, given the history of failures of the component up to time $t$, $\mathscr{H}_{i,t-}$. 

Similarly, we can define the failure intensity of the system as $\lambda_s^{(N_{s,t})}(t|\mathscr{H}_{s,t-})$ by replacing $\lambda_i^{(N_{i,t})}(t|\mathscr{H}_{i,t-})$ in Eq. \eqref{eq:Intensity} with $\lambda_s^{(N_{s,t})}(t|\mathscr{H}_{s,t-})$, $\mathscr{H}_{i,t-}$ with $\mathscr{H}_{s,t-}$, $N_{i,t}$ with $N_{s,t}$, and $T_i^{(k)}$ with $T_s^{(k)}$, respectively, where $N_{s,t}$ or $N_s(t)$ is the number of failures of the system that has been observed up to time $t$ and $\mathscr{H}_{s,t-}$ is the history of failures of the system up to time $t$, $N_{s,t} \ge 0$.

Although the failure intensity function of an item (which may be a component or the system) should be denoted by the memory of $\mathscr{H}_{s,t-}$ such as $\lambda_i^{(N_{i,t})}(t|\mathscr{H}_{i,t-})$ and $\lambda^{(N_{s,t})}_s(t|\mathscr{H}_{s,t-})$. For simplicity, this paper will use notations $\lambda^{(i,N_{i,t})}(t)$ and $\lambda_s^{(N_{s,t})}(t)$, respectively.

 The paper uses the term the ``failure intensity function'' of the SRP or SGRP and the term the ``rate'' of the SRP or SGRP exchangeably.
\subsection{Lower and upper bounds of the rate of SRP with identical components}
\label{sec:bounds-SRP}
In this section, we derive the lower and upper bounds of the rate of the SRP.

Suppose a system consisting of $n$ sockets into each of which a component is inserted. If a component fails, it is replaced immediately and system operation resumes. As above discussed, the failure process of the system is the superimposed renewal processes. We have the following proposition.
\begin{proposition}\label{prop:proposition-1} If replacement is performed on each failed component, or the failure process of the system is the SRP, then
%	\begin{linenomath*}
	\begin{equation}
			\sum_{i=0}^{n-1}  \lambda(t-T_s^{(N_{_{s,t}}-i)})\le \lambda_s(t) \le (n-1)\lambda(t)+ \lambda(t-T_s^{(N_{s,t})}),
			\label{eq:proposition-1}
	\end{equation}  
% \end{linenomath*}
where $T_s^{(k)}=0$ if $k \le 0$ and $N_{s,t} \ge 0$.
\end{proposition}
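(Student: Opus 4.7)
My plan is to exploit the series-with-independence structure (Assumption A1) to decompose $\lambda_s(t)$ as a sum over the $n$ components, and then to control the summands individually by monotonicity of $\lambda$ (Assumption A2) together with a short counting argument on the replacement epochs. Because the components are independent and the series system fails the instant any one of them fails, the stochastic intensity of the superposition equals the sum of the component intensities. Each component is replaced on failure, so its intensity at time $t$ is $\lambda(t-r_i)$, where $r_i$ is the epoch of its most recent replacement (with $r_i=0$ if the component has never failed). Each $r_i$ therefore lies in $\{0,T_s^{(1)},\ldots,T_s^{(N_{s,t})}\}$, and letting $A_1\ge A_2\ge\cdots\ge A_n\ge 0$ denote the values of $\{r_1,\ldots,r_n\}$ sorted in decreasing order gives
\[
\lambda_s(t)=\sum_{i=1}^n \lambda(t-A_i),\qquad \lambda(t-A_1)\le \lambda(t-A_2)\le \cdots\le \lambda(t-A_n).
\]

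The upper bound then follows at once: the globally most recent system failure is necessarily the most recent of all current replacement times, so $A_1=T_s^{(N_{s,t})}$, while $A_i\ge 0$ for $i\ge 2$ gives $\lambda(t-A_i)\le \lambda(t)$ by monotonicity, hence $\lambda_s(t)\le \lambda(t-T_s^{(N_{s,t})})+(n-1)\lambda(t)$. For the lower bound the key claim is $A_i\le T_s^{(N_{s,t}-i+1)}$ for each $i=1,\ldots,n$, under the convention $T_s^{(k)}=0$ for $k\le 0$. The proof is a pigeonhole/scanning argument: reading the system failures from newest to oldest, each scanned failure contributes at most one previously unseen component to the set of current replacement times, so by the time only $i$ failures have been examined (down to index $N_{s,t}-i+1$) at most $i$ distinct components have been recorded; the component providing the $i$-th largest current replacement time $A_i$ must therefore have been last replaced no later than $T_s^{(N_{s,t}-i+1)}$, or never replaced at all (in which case $A_i=0\le T_s^{(N_{s,t}-i+1)}$ under the convention). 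Monotonicity of $\lambda$ then delivers $\lambda(t-A_i)\ge \lambda(t-T_s^{(N_{s,t}-i+1)})$, and summing over $i$ recovers the left-hand inequality.

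The main obstacle is precisely this pigeonhole step together with its edge cases: I would need to verify that the convention $T_s^{(k)}=0$ for $k\le 0$ handles the regime $N_{s,t}<n$ (in which some components have never been replaced) consistently, and to check that equality in the lower bound occurs precisely when the last $n$ system failures happen to fall on $n$ distinct components — matching the intuition recorded in Section \ref{Ideas} that the lower bound corresponds to always assigning repairs to the oldest component in the system and the upper bound to always assigning them to the youngest.
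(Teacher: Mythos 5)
Your proposal is correct, and for the lower bound it takes a genuinely different route from the paper. Both arguments start from the same two ingredients — the series decomposition $\lambda_s(t)=\sum_{i=1}^n\lambda(t-T_i^{(N_{i,t})})$ from Eqs.~\eqref{eq:perfect} and \eqref{eq:system-repair}, and monotonicity of $\lambda$ from A2 — and your upper-bound argument (the most recently replaced component contributes $\lambda(t-T_s^{(N_{s,t})})$, every other component contributes at most $\lambda(t)$) is essentially identical to the paper's passage from \eqref{eq:SRP1} to \eqref{eq:Part2}. For the lower bound, however, the paper argues by induction on $N_{s,t}$ (Eqs.~\eqref{eq:system-repair1}--\eqref{eq:Part1}): the inductive step observes that moving from $N$ to $N+1$ swaps the term $\lambda(t-T_s^{(N-n+1)})$ for $\lambda(t-T_s^{(N+1)})$, which can only decrease the candidate sum since $T_s^{(N-n+1)}\le T_s^{(N+1)}$. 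You instead sort the actual last-replacement epochs $A_1\ge\cdots\ge A_n$ and prove $A_i\le T_s^{(N_{s,t}-i+1)}$ directly by pigeonhole: the values $A_1,\dots,A_i$ exceeding $T_s^{(N_{s,t}-i+1)}$ would have to be $i$ distinct elements of the $(i-1)$-element set $\{T_s^{(N_{s,t}-i+2)},\dots,T_s^{(N_{s,t})}\}$. This buys you something concrete: the argument is non-inductive, handles the regime $N_{s,t}<n$ uniformly through the convention $T_s^{(k)}=0$ for $k\le 0$ (at most $N_{s,t}$ components have nonzero replacement epochs, so $A_i=0$ for $i>N_{s,t}$), and identifies exactly when the lower bound is attained (the last $n$ system failures fall on $n$ distinct components). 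It also sidesteps a looseness in the paper's induction, where the hypothesis is stated for a time at which the system has seen $N$ failures but is invoked at a time with $N+1$ failures, and the final line of \eqref{eq:Part1} writes an equality where the induction hypothesis only supplies an inequality. The one hypothesis your pigeonhole step silently uses — that distinct components are never replaced at the same system-failure epoch, so the nonzero $A_j$ are distinct — holds because each system failure is caused by exactly one component, so it is worth stating but is not a gap.
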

\begin{proof} It is noted that neither $N_{_{i,t}}$ nor $T_i^{(N{_{i,t}})}$ is available. 

Since a replacement is carried out at time $T_i^{(N{_{i,t}})}$ on component $i$, then 
\begin{align}
\lambda_i^{(N{_{i,t}})}(t)=\lambda(t-T_i^{(N{_{i,t}})}),
\label{eq:perfect}
\end{align}
where $\lambda_i^{(N{_{i,t}})}(t)$ is the failure rate of component $i$ at time $t$, and $\lambda(t-T_i^{(N{_{i,t}})})$ is the failure rate of component $i$ at time $t$ and  its latest replacement occurs at time $T_i^{(N{_{i,t}})}$.

Component $i$ becomes more reliable after a replacement at time $T_i^{(N{_{i,t}})}$ than without the replacement at that time point, then, 
\begin{align}
\lambda_i^{(N{_{i,t}})}(t) \le \lambda_i^{(N_{_{i,t}}-1)}(t).
\label{eq:minimal}
\end{align}

Since the system is structured in series, the failure intensity function of the system can be expressed by
\begin{align}
 \lambda_s(t)=\lambda_s^{(N_{s,t})}(t) = \sum^n_{i=1} \lambda_i^{(N{_{i,t}})}(t).
\label{eq:system-repair}
\end{align}
Now we are going to prove the first part of the inequality, $\displaystyle{\sum_{i=0}^{n-1}  \lambda(t-T_s^{(N_{_t}-i)})\le \lambda_s(t)}$, by induction.

If $N_{s,t}=0$, $\displaystyle{\lambda_s(t) = \sum^{n-1}_{i=0} \lambda_i^{(0)}(t) = \sum^{n-1}_{i=0} \lambda(t)}$. 

If $N_{s,t}=1$, which means the system has experienced one failure and the label of the component that causes the system to fail is unknown, then 
\begin{flalign}
 \lambda_s(t)&=\lambda_s^{(1)}(t) &\nonumber \\
 & = \sum^n_{i=1,i \neq j} \lambda_i^{(0)}(t) + \lambda_j^{(1)}(t)&  [\footnotesize{\textrm{assuming component $j$ has failed}}]\nonumber \\
 & = \sum^n_{i=1,i \neq j} \lambda(t) + \lambda(t-T_j^{(1)}) & [\footnotesize{\textrm{according to Eq. \eqref{eq:perfect}}}] \nonumber \\
 & = \sum^n_{i=1,i \neq j} \lambda(t) + \lambda(t-T_s^{(1)}). & [\footnotesize{\textrm{since there is only one failure, } T_{j,1}=T_1}] \nonumber \\
 & = \sum^{n-1}_{i=0} \lambda(t-T_s^{(N_{s,t}-i)}). & [\footnotesize{\textrm{because $T_s^{(N_{s,t}-i)}=0$ for } {N_{s,t}-i} \le 0}]
\label{eq:system-repair1}
\end{flalign}
In Eq. \eqref{eq:system-repair1}, the sentence on the right hand side of each equality is the  comment on the reason that the respective equality is derived.

Now assume the inequality holds when $N_{s,t}=N$. That is, the system has experienced $N$ failures up to time $t$ and 
\begin{equation} 
\lambda_s(t) \ge \sum_{i=0}^{n-1}  \lambda(t-T_s^{(N-i)})
\label{eq:N_induction}
\end{equation} 
holds. 

When $N_{s,t}=N+1$, we have
\begin{align}
\sum_{i=0}^{n-1}  \lambda(t-T_s^{(N+1-i)}) &= \lambda(t-T_s^{(N+1)})+\lambda(t-T_s^{(N)})+\dots \lambda(t-T_s^{(N-n+2)}) \nonumber \\
&= \sum_{i=0}^{n-2}  \lambda(t-T_s^{(N-i)}) + \lambda(t-T_s^{(N+1)}) \nonumber \\
&= \sum_{i=0}^{n-1}  \lambda(t-T_s^{(N-i)})-\lambda(t-T_s^{(N-n+1)}) + \lambda(t-T_s^{(N+1)})& \nonumber \\
& \le \sum_{i=0}^{n-1}  \lambda(t-T_s^{(N-i)})\qquad \qquad \qquad [\footnotesize{\textrm{Because } T_s^{(N-n+1)} \le T_j^{(N+1)} \textrm{and } \lambda(t) \textrm{ increases in } t}] \nonumber \\
&=\lambda_s(t)
\label{eq:Part1}
\end{align}
Now we are to prove the second part $\lambda_s(t) \le (n-1)\lambda(t)+ \lambda(t-T_s^{(N_{s,t})})$ as follows.

From inequality \eqref{eq:minimal}, we can obtain
\begin{align}
\lambda_i^{(N_{_{i,t}})}(t) \le \lambda_i^{(N_{_{i,t}}-1)}(t) \le  \lambda_i^{(N_{_{i,t}}-2)}(t) \le \dots \le \lambda_i^{(0)}(t)=\lambda(t).
\label{eq:minimal-1}
\end{align}
Hence, combining inequalities \eqref{eq:perfect}, \eqref{eq:system-repair},  and \eqref{eq:minimal-1}, we obtain
\begin{equation}
    \sum_{i=1}^n \lambda(t-T_s^{(N_{_t}-i)})\le \lambda_s(t) \le n \lambda(t).
    \label{eq:SRP1}
	\end{equation} 
Since the SRP assumes that repair on a failed component is perfect, it ensures that the component with the latest repair (i.e., the replacement at time $T_s^{(N_{s,t})}$) has failure rate $\lambda(t-T_s^{(N_{s,t})})$. Hence, we can obtain
\begin{equation}
 \lambda_s(t) \le (n-1) \lambda(t) + \lambda(t-T_s^{(N_{s,t})}).
    \label{eq:Part2}
\end{equation} 
Combining inequalities \eqref{eq:Part1} and \eqref{eq:Part2}, we establish Proposition \ref{prop:proposition-1}. \hfill{$\blacksquare$}
\end{proof}

Proposition \ref{prop:proposition-1} can be interpreted as follows. 

Always replacing the oldest component with a new component upon a failure of the system results in the largest reduction in the rate of the SRP, which keeps the system at the most reliable state and therefore the smallest failure intensity. On the other hand, always replacing the youngest component with a new component upon a failure of the system results in the smallest reduction in the rate of the SRP, which keeps the system at the least reliable state and therefore the largest failure intensity.

\subsection{Lower and upper bounds of the rate of SGRP with identical components}
\label{sec:bounds-SGRP}
Similar to Proposition \ref{prop:proposition-1}, for the case where failed components are repaired (instead of replacement), a similar proposition is given below.
\begin{proposition}\label{prop:proposition-2} Under assumptions A1--A4 in Section \ref{sec:assumptions}, the failure intensity $\lambda_s(t)$ of the system after the $N_{s,t}$-th repair satisfies
%	\begin{linenomath*} 
	\begin{equation}
  \sum_{i=0}^{n-1}  \lambda_i^{(N_{_t}-i)}(t) \le \lambda_s(t) \le (n-1)\lambda(t)+\lambda^{(N_{s,t})}(t),
			\label{eq:proposition-2}
	\end{equation}   
% \end{linenomath*}
	where $\lambda_i^{(N_{_t}-i)}(t)=\lambda(t)$ for $N_{_t}-i \le 0$.
\end{proposition}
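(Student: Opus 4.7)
The plan is to mirror the proof of Proposition \ref{prop:proposition-1}, replacing the perfect-repair specialization $\lambda_i^{(N_{i,t})}(t)=\lambda(t-T_i^{(N_{i,t})})$ of Eq. \eqref{eq:perfect} by the general component intensity $\lambda_i^{(N_{i,t})}(t)$ permitted by A4. Two structural facts carry over untouched and drive the whole argument: the series-system decomposition $\lambda_s(t)=\sum_{i=1}^{n}\lambda_i^{(N_{i,t})}(t)$ of Eq. \eqref{eq:system-repair} remains valid under A1, and repair of any kind cannot worsen a component, so the monotonicity chain in Eq. \eqref{eq:minimal-1} is still available in the form
$$\lambda_i^{(k)}(t)\le\lambda_i^{(k-1)}(t)\le\cdots\le\lambda_i^{(0)}(t)=\lambda(t).$$
Under the virtual-age instance of A4 this is immediate, since the virtual age after the $k$-th repair is bounded above by the virtual age at time $t$ with one fewer repair, and $\lambda$ is nondecreasing by A2.

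For the upper bound I would first argue, as in the proof of Eq. \eqref{eq:Part2}, that $n-1$ of the summands are bounded by $\lambda(t)$ via the monotonicity chain above. The sharper term $\lambda^{(N_{s,t})}(t)$ I would obtain by an exchange argument on the repair-count multiset $(N_{1,t},\ldots,N_{n,t})$: given two components with $N_{j,t}<N_{i,t}$, transferring one repair from $i$ to $j$ changes $\sum_{\ell}\lambda^{(N_{\ell,t})}(t)$ by $[\lambda^{(N_{i,t}-1)}(t)-\lambda^{(N_{i,t})}(t)]-[\lambda^{(N_{j,t})}(t)-\lambda^{(N_{j,t}+1)}(t)]$, which is non-positive whenever the marginal repair benefit $\lambda^{(k)}(t)-\lambda^{(k+1)}(t)$ is non-increasing in $k$ --- the diminishing-returns property built into virtual-age specifications with a fixed repair degree $A\in[0,1]$ and monotone $\lambda$. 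Iterating the exchange pushes the extremal configuration to $(N_{s,t},0,\ldots,0)$, which yields the stated bound and matches the heuristic that the worst policy is always to repair the youngest component.

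For the lower bound I would induct on $N_{s,t}$, mirroring the algebraic manipulation in Eq. \eqref{eq:Part1}. The convention $\lambda_i^{(N_t-i)}(t)=\lambda(t)$ for $N_t-i\le 0$ makes the base cases reduce to sums of copies of $\lambda(t)$ plus the intensities of the components that have actually been repaired. In the inductive step, the univariate inequality $\lambda(t-T_s^{(N-n+1)})\ge\lambda(t-T_s^{(N+1)})$ used in Eq. \eqref{eq:Part1} is replaced by the componentwise chain $\lambda_i^{(k)}(t)\ge\lambda_i^{(k+1)}(t)$, showing that the bound at $N+1$ is no larger than the bound at $N$ and so still dominated by $\lambda_s(t)$. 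The main obstacle is precisely this substitution: under the SRP each $\lambda_i^{(N_{i,t})}(t)$ depends only on the scalar $T_i^{(N_{i,t})}$, whereas under a GRP it depends on the whole repair history, so a repair-exchange between components changes both the terminal index and the intermediate virtual ages. I would carry it out inside the virtual-age parameterization of A4, where the intensity after the $k$-th repair equals $\lambda(V_{i,k}+t-T_i^{(k)})$, and monotonicity of $\lambda$ together with $A\in[0,1]$ yields the pointwise dominance needed to close the induction; for non-virtual-age GRP families such as ARI or the geometric process, this structural monotonicity would have to be checked case by case.
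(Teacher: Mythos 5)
Your approach coincides with the paper's in the only sense available: the paper's entire proof of Proposition \ref{prop:proposition-2} is the single sentence that it ``can be easily established by mimicking the proof process of Proposition \ref{prop:proposition-1}'' (the source even cites Proposition \ref{prop:proposition-2} itself there, evidently a typo), and mimicking that proof is precisely what you set out to do. The difference is that you actually attempt the mimicry and in doing so surface the two places where it is not routine. First, for the upper bound, the monotonicity chain $\lambda_i^{(k)}(t)\le\lambda_i^{(k-1)}(t)\le\cdots\le\lambda(t)$ only yields $\lambda_s(t)\le n\lambda(t)$; to sharpen one summand to $\lambda^{(N_{s,t})}(t)$ you correctly see that one must show the repair-count configuration $(N_{s,t},0,\dots,0)$ maximizes $\sum_{\ell}\lambda^{(N_{\ell,t})}(t)$ subject to $\sum_{\ell}N_{\ell,t}=N_{s,t}$, and your exchange argument requires the marginal repair benefit $\lambda^{(k)}(t)-\lambda^{(k+1)}(t)$ to be non-increasing in $k$. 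That diminishing-returns (convexity) property is not implied by A1--A4 as stated and has no counterpart in the SRP proof, where the analogous step is carried by a scalar comparison of replacement times; so this is a genuine additional hypothesis you are adding, and you are right to flag it rather than assert it. Second, you rightly note that a GRP intensity after $k$ repairs depends on the entire repair history rather than on $k$ alone, so both the symbol $\lambda^{(N_{s,t})}(t)$ in the statement and any repair-exchange step are only well defined once a parameterization (e.g., the virtual-age model of A4) is fixed; your plan to verify the needed monotonicity inside that parameterization and to check other GRP families case by case is the honest way to close the argument. In short, your proposal follows the paper's intended route but is substantially more rigorous than the paper's one-line proof, and it correctly identifies the structural conditions without which the ``mimicking'' does not go through.
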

\begin{proof}
 The proof can be easily established by mimicking the proof process of Proposition \ref{prop:proposition-2}. 
\end{proof} 

Proposition \ref{prop:proposition-2} can be interpreted similarly to that of Proposition  \ref{prop:proposition-1}, but interpreted in another way, as follows. The left and right terms in Inequality \eqref{eq:proposition-2}  are two extreme scenarios: the most and the least reliable situations. In the most reliable situation, the failures of the system are always due to the failure of the most anciently repaired component; in the most unreliable situation, the failures of the system are always due to the failure of the most recently repaired component. These two scenarios form the lower bound and the upper bound of the failure intensity of the system, respectively.
\subsection{A series system composed of heterogeneous components}
Proposition \ref{prop:proposition-2} derived the lower and upper bounds of the rate of the SGRP based on the assumptions A1--A4, where A1 assumes  that the system is composed of identical components. If we assume the components in the system are heterogeneous, then we can derive the following upper bound.
\begin{proposition}\label{prop:proposition-3} Suppose a system is composed of $n$ components. The initial intensity functions of the components are $\xi_{i}(t)$ where $i=1, 2, \dots, n$ and $\xi_1(t)\le \xi_2(t) \le \dots \le \xi_n(t)$.  Assume the maintenance effectiveness upon failures is the same for each individual component. The failure intensity $\lambda_s(t)$ of the system after the $N_{s,t}$-th repair satisfies
%	\begin{linenomath*} 
	\begin{equation}
			\lambda_s(t) \le \sum_{i=2}^n\xi_i(t)+\xi_1^{(N_{s,t})},
			\label{eq:proposition-3}
	\end{equation}   
% \end{linenomath*}
where $\xi_1^{(N_{s,t})}$ is the failure intensity function of component $i$ that has experienced $N_{s,t}$ failures.
\end{proposition}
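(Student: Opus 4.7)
The plan is to extend the upper-bound argument of Proposition \ref{prop:proposition-1} to the heterogeneous setting. First I would reuse the monotonicity observation that for each component and every $k\ge 0$, $\lambda_i^{(k)}(t) \le \lambda_i^{(0)}(t) = \xi_i(t)$, which is the analogue of inequality \eqref{eq:minimal-1}. Summing over $i$ via the series decomposition \eqref{eq:system-repair} yields the crude bound $\lambda_s(t) \le \sum_{i=1}^n \xi_i(t)$, and the goal is then to sharpen the first term to $\xi_1^{(N_{s,t})}(t)$.

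The key step is to identify the assignment of the $N_{s,t}$ failures among the $n$ components that maximises $\lambda_s(t)$. Analogously to the ``least reliable'' interpretation given after Proposition \ref{prop:proposition-2}, I claim that this maximum is attained when every one of the $N_{s,t}$ failures is charged to component $1$, the component with the smallest baseline intensity, while components $2,\dots,n$ stay in their as-good-as-new state. I would prove this by an exchange argument: replacing a repair originally assigned to component $j\ge 2$ at some time $\tau$ by a repair on component $1$ at the same time $\tau$ cannot decrease $\lambda_s(t)$, because under A4 the repair-effectiveness model acts identically on every component and the marginal reduction in intensity due to an extra repair is non-decreasing in the baseline intensity. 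Iterating the swap moves all repairs onto component $1$, after which
\[
\lambda_s(t) \le \xi_1^{(N_{s,t})}(t) + \sum_{i=2}^n \xi_i^{(0)}(t) = \xi_1^{(N_{s,t})}(t) + \sum_{i=2}^n \xi_i(t),
\]
which is the claim.

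The main obstacle is the rigorous formalisation of the exchange inequality, namely that for $\xi_1(t)\le \xi_j(t)$ and identical repair effectiveness, the drop $\xi_i(t)-\xi_i^{(k)}(t)$ produced by $k$ repairs is non-decreasing in the baseline curve $\xi_i$. For virtual-age models (ARA, ARI) this can be read off because the repaired intensity is obtained from the baseline by a time shift or scaling that depends only on the repair history, so pointwise comparisons of baseline intensities propagate to comparisons of the corresponding repaired intensities. In full generality it requires a monotonicity hypothesis on how the repair model composes with the baseline intensity, which is implicit in A4. A clean way to organise the whole argument is induction on $N_{s,t}$, mirroring the steps in the proof of Proposition \ref{prop:proposition-1}: the base case $N_{s,t}=0$ is an equality, and the inductive step combines the exchange argument with \eqref{eq:system-repair} and \eqref{eq:minimal-1} to propagate the bound one failure at a time.
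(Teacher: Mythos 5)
Your write-up is far more explicit than the paper's, which disposes of this proposition in a single sentence (``Similar to the proof of Proposition \ref{prop:proposition-1}''), and the first half of your argument is fine: the crude bound $\lambda_s(t)\le\sum_{i=1}^n\xi_i(t)$ does follow from the per-component monotonicity \eqref{eq:minimal-1} together with the series decomposition \eqref{eq:system-repair}. You are also right to isolate the exchange inequality as the crux. The difficulty is that this inequality is not merely ``the main obstacle'' to be formalised later: under the hypotheses actually stated (pointwise ordering $\xi_1\le\cdots\le\xi_n$, identical repair effectiveness, increasing baselines) it is false, and so is the claim that for virtual-age models it ``can be read off.'' A pointwise comparison of baselines does propagate to a pointwise comparison of the time-shifted intensities, i.e.\ $\xi_1(t-c)\le\xi_j(t-c)$; but the exchange step requires an ordering of the \emph{reductions} $\xi_i(t)-\xi_i(t-c)$, which is a statement about increments and does not follow from a pointwise ordering of the functions themselves.

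Concretely, take $n=2$, $\xi_1(t)=2(1-e^{-t})$ and $\xi_2(t)=2+0.001\,t$, both increasing with $\xi_1(t)\le 2\le \xi_2(t)$ for all $t\ge 0$, and perfect repair (ARA$_1$ with $\rho=1$) on both components. On the event that the first system failure is a failure of component $2$ at time $0.5$, the conditional intensity at $t=1$ is $\lambda_s(1)=\xi_1(1)+\xi_2(0.5)\approx 3.265$, whereas the claimed bound equals $\xi_2(1)+\xi_1^{(1)}(1)=\xi_2(1)+\xi_1(0.5)\approx 2.788$. The point is that the component with the smaller baseline is here much steeper, so charging the repair to component $1$ produces a far larger intensity reduction than the repair that actually occurred on component $2$: your swap strictly \emph{decreases} the system intensity instead of increasing it, and iterating it moves you away from the maximum rather than towards it. To make the exchange argument (and the proposition) go through you need an extra hypothesis ordering the increments, e.g.\ $\xi_1(t)-\xi_1(s)\le\cdots\le\xi_n(t)-\xi_n(s)$ for all $s\le t$ (equivalently $\xi_1'\le\cdots\le\xi_n'$), or the assumption that the $\xi_i$ are vertical shifts of a common shape; with such a hypothesis your induction-plus-exchange scheme does work. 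As written, both your proof and the paper's one-line appeal to Proposition \ref{prop:proposition-1} leave this essential step unjustified, and the step is exactly where the heterogeneous case ceases to be ``similar'' to the homogeneous one.
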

\begin{proof} Similar to the proof of Proposition \ref{prop:proposition-1}, we can establish Proposition \ref{prop:proposition-3}.
\end{proof}
\subsection{Systems with other structures}
The preceding subsections investigate the lower and upper bounds of the rate of a SGRP. For a system that is structured in other types such as parallel systems or $k$-out-of-$n$ systems, the lower and upper bounds of the rate of the system failure process can be derived based on the same principle as in Proposition \ref{prop:proposition-2}. For example, for a parallel system composed of identical components, the lower bound of the failure intensity of the system can be derived by assuming that the component with the smallest conditional failure intensity is repaired, and the upper bound can be obtained by assuming that the component with the largest conditional failure intensity is repaired.
\section{Approximation of the SGRP}
\label{sec:approximation}
This section aims to propose a model to approximate the SGRP and then propose an algorithm to simulate the approximation method.
\subsection{A model to approximate the SGRP}
An interesting question is to approximate the rate of the SGRP. Based on Proposition \ref{prop:proposition-2}, we may use the following model, which is the weighted average of the lower bound and the upper bound of the Inequality \eqref{eq:proposition-2}, respectively, to approximate the rate of the SGRP of the $n$ components in a series system for $N_{s,t} \ge 1$. 
%\begin{linenomath*}
	\begin{align}
		\widetilde{\lambda}_s(t) &=
		\delta \sum_{i=0}^{n-1} \lambda^{(N_{_t}-i)}(t)+(1-\delta)\left( (n-1)\lambda(t)+\lambda^{(N_{s,t})}(t)\right), \nonumber \\
		&=(n-1)(1-\delta)\lambda(t) + (1-\delta) \lambda^{(N_{s,t})}(t) + \delta\sum_{i=0}^{n-1}  \lambda^{(N_{_t}-i)}(t)
		\label{eq:MyModel-0}
\end{align} 
%\end{linenomath*}
where $\delta \in [0,1]$.

\begin{remark} Eq. \eqref{eq:MyModel-0} has the following special cases.
	\begin{itemize}
		\item If $\lambda(t)=\lambda$, then $ \widetilde{\lambda}_s(t)=\lambda$. That is, the failure intensity of a system composed of components with a constant failure intensity, i.e., failure rate, is constant.
		\item If $n=1$, then $ \widetilde{\lambda}_s(t)=\lambda^{(N_{s,t})}(t)$. That is, the system is composed of a single component, whose repair process is modelled by a GRP.
		\item If $\delta=0$, then  $ \widetilde{\lambda}_s(t)= (n-1)\lambda(t)+\lambda^{(N_{s,t})}(t)$. This implies that the system can be assumed to be a two-component system, in which minimal repair is applied to the component with intensity function $(n-1) \lambda(t)$ and PIM is applied to the component with failure intensity function $\lambda^{N_{s,t}}(t)$. This case is also a variant of Model I proposed in \cite{wu2017two}, in which perfect repair is applied to the component with failure intensity function $\lambda^{N_{s,t}}(t)$.
		\item If $\delta=1$, then model \eqref{eq:MyModel-0} is the failure intensity of a system composed of $n$ components in series with the failure process as a SRP on each repair and reduces to the  moving average of intensity model (MAI) proposed in \cite{wu2019failure}, which shows the outstanding performance compared with ten other models on 15 real datasets and simulation data in terms of AIC (Akaike information criterion), BIC (Bayesian information criterion) and corrected AIC.
		\item If $\delta=0.5$, then model \eqref{eq:MyModel-0} reduces to Model II proposed in \cite{wu2017two}. According to \cite{wu2017two}, Model II outperforms five other models on simulated data and a real dataset in terms of the AIC.
	\end{itemize}
\end{remark}

Denote $\lambda_{s,L}(t)=\sum_{i=0}^{n-1} \lambda^{(N_{_t}-i)}(t)$ and $\lambda_{s,U}=(n-1)\lambda(t)+\lambda_t^{(N_{s,t})}$.  $\lambda_{s,L}(t)$ is the failure intensity of the system whose failures are always due to the failures of the most anciently repaired component, or the failure of the oldest component. $\lambda_{s,U}(t)$ is the failure intensity of the system whose failures are  always due to the failures of the component that has been most recently repaired, or the youngest component. The approximation method in Eq. \eqref{eq:MyModel-0} is a weighted average of two extremist scenarios. Below is an example that provides a visual interpretation. 

\begin{example} Given a series system that consists of four components, whose failures at time points are shown in the top four horizontal lines in Figure \ref{fig:fig01}. The superposition of the four failure processes is shown at the bottom horizontal line: $T_s^{(1)}, T_s^{(2)}, \dots, T_s^{(11)}$ . As can be seen, components 1, 2, 3 and 4 have 3, 2, 2, and 4 failures, respectively. 
	\begin{figure}[h]
		\centering
		\includegraphics[width=0.9\linewidth]{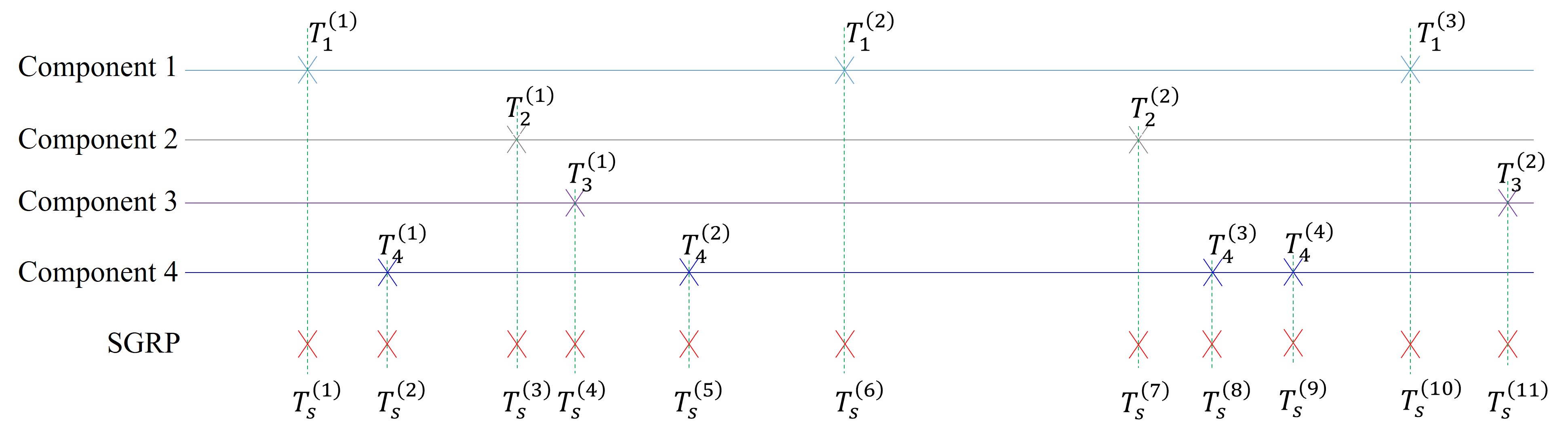}
		\caption{An example of the superposition of four imperfect repair processes.}
		\label{fig:fig01}
	\end{figure}
	Now let us assume that only masked failure data are available. That is, only values of $T_s^{(1)}, T_s^{(2)}, \dots, T_s^{(11)}$ are available but $T_i^{(k)}$ are unavailable, as illustrated in Figure \ref{fig:fig02}.
	\begin{figure}[h]
		\centering
		\includegraphics[width=0.9\linewidth]{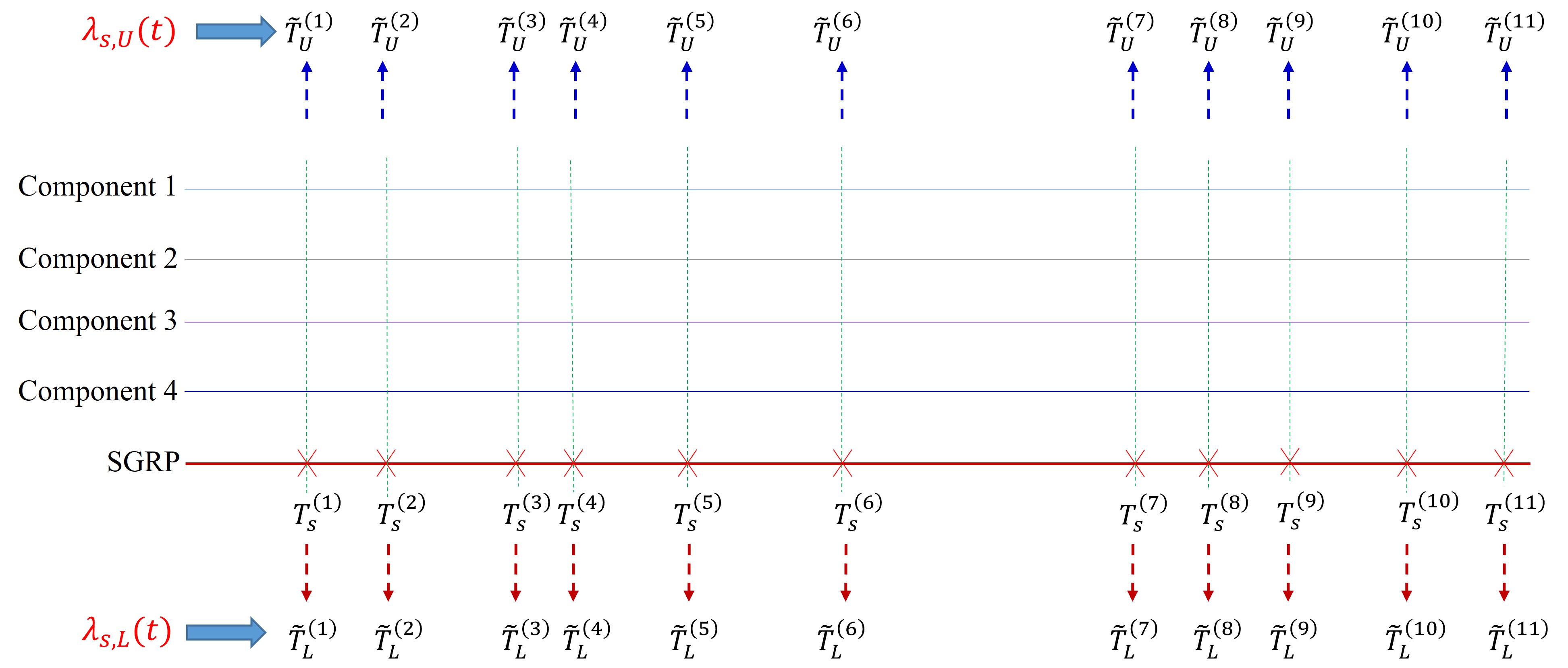}
		\caption{$\lambda_{s,L}(t)$ and $\lambda_{s,U}(t)$ assume the masked failure data to be unmasked ones, respectively.}
		\label{fig:fig02}
	\end{figure}
\end{example}
In the literature, there are many GRPs that have been proposed. Below we assume the arithmetic reduction of age (ARA) model \cite{doyen2004classes} is applied. 
\begin{example}
In case the arithmetic reduction of age (ARA) model \cite{doyen2004classes} is used to model the repair process of component $i$, then its initial failure intensity function is given by
%\begin{linenomath*} 
\begin{equation}
		\lambda_i^{(N_{i,t})}(t)=\lambda\left(t- \rho\sum_{j=0}^{\min\{m-1,N_{_{i,t}}-1\}} (1-\rho)^j T_{N_{_{i,t}}-j}\right),
		\label{eq:ARA-m}
\end{equation}   
%\end{linenomath*}
where $\rho \in [0,1]$: (a) the repair is a harmful repair if $\rho<0$; (b) the repair is a minimal repair if $\rho=0$; (c) the repair is an good-as-new repair (or replacement) if $\rho=1$; and (d)  the repair is efficient if $\rho \in (0,1)$.

If $m=1$, then Eq. \eqref{eq:ARA-m} reduces to the virtual age model I \cite{kijima1989some}.

If the failure data are masked, the failure intensity, $\widetilde{\lambda}_s(t)$, of the superposition of the $n$ ARA$_m$ failure processes is therefore approximated by the expressions of the model \eqref{eq:MyModel-0}, where the repair process $\lambda^{(i,N_{s,t})}(t)$ follows the ARA$_m$ model.

According to  Eq. \eqref{eq:MyModel-0}, we have
\begin{itemize}
 \item when $N_{s,t} =0$, $ \widetilde{\lambda}_s(t)=\lambda(t);$
 \item when $1 \le N_{s,t} \le n$ and $n \ge 2$, $
		\widetilde{\lambda}_s(t)=(n-N_{s,t})\delta+(n-1)(1-\delta) \lambda(t) +(1-\delta) \lambda \left(t- \rho T_s^{(N_{s,t})}\right)+ \delta \sum_{i=1}^{N_{s,t}}  \lambda\left(t- \rho T_i\right);
$ and 
   \item when $N_{s,t} > n$, $
		\widetilde{\lambda}_s(t)
		=  (n-1)(1-\delta) \lambda(t) + (1-\delta) \lambda\left(t- \rho\sum_{j=0}^{q_{_{N_s(t)}}}(1-\rho)^j T_{N_{s,t}-j}\right)$	\\  $+\delta \sum_{i=0}^{n-1} \lambda \left(t- \rho \sum_{j=0}^{q_{_{N_s(t)}}} (1-\rho)^j T_{N_{s,t}-nj-i}\right),
$
where $q_{_{N_s(t)}}=\min\{\lfloor{\frac{N_s(t)}{n}}\rfloor -1,m-1\}$.
\end{itemize}
\end{example}
Then we can re-write model \eqref{eq:MyModel-0} as follows.%\begin{linenomath*}
\begin{equation}
		\psi(N_{s,t},\lambda(t))=(\max\{n-N_{s,t},0\} \delta+(n-1)(1-\delta)) \lambda(t) + (1-\delta) \lambda_t^{(N_{s,t})} + \delta\sum_{i=0}^{\min\{N_{s,t}-1,n-1\}} \lambda_t^{(N_{_t}-i)}.
\end{equation} 
%\end{linenomath*}
Then, model \eqref{eq:MyModel-0} can be re-written by 
%\begin{linenomath*}
	\begin{equation}
		\widetilde{\lambda}_{s}(t)= \left\{ \begin{array}{ll}
			\lambda(t) & \textrm{if $N_{s,t}=0$}\\
			\psi(N_{s,t},\lambda(t)) & \textrm{if $N_{s,t} \ge 1$}.
		\end{array} \right.
		\label{eq:MyModel-1}
	\end{equation}
%\end{linenomath*}

\subsection{A simulation algorithm}
\label{sec:simulation}
To develop an algorithm to simulate the behaviour of the stochastic process model is useful in the real applications. This section therefore aims to develop an algorithm to simulate the failure process based on $\widetilde{\lambda}_s(t)$ in Eq. \eqref{eq:MyModel-0} and then shows some simulation examples.

Algorithm \ref{Simulation} shows the procedure of simulating the conditional failure intensity of $\widetilde{\lambda}_s(t)$ given in Eq. \eqref{eq:MyModel-0}, in which we assume that the failure rate of each component is $\frac{1}{n}\lambda(t)$. To simplify notations in the Algorithm, we denote $\Phi_1=\frac{\delta}{n} \sum_{i=0}^{\min\{N_{s,t}-1,n-1\}} \lambda^{(N_{_t}-i)}(t) $,  $\Phi_2=\frac{(1-\delta)(n-1)}{n} \lambda(t)$, and $\Phi_3=\frac{1-\delta}{n} \lambda_t^{(N_{s,t})}$, which corresponds the three items in Eq. \eqref{eq:MyModel-0}.

Basically, the simulation method proposed in Algorithm \ref{Simulation} regards the model in Eq. \eqref{eq:MyModel-0} as a series system that is composed of three subsystems with failure intensity functions $\Phi_1, \Phi_2$, and $\Phi_3$, respectively. See the descriptions after symbols `/*' and `*/'
in the Algorithm.

{\small
	\begin{singlespace}
		\begin{algorithm}[H]
			%   \SetKwInOut{Input}{Input}
			%   \SetKwInOut{Output}{Output}
			\KwData{\begin{itemize}
					\item Given $\Lambda_0(v)=\int_0^v \lambda(t) dt$ and $\Lambda_1(v)=\frac{(1-\delta)(n-1)}{n} \Lambda_0(v)$.%,  and $F_2(t)=1-\exp\{-\frac{\delta}{n}\Lambda_0(t) \}$.
					\item Given $\tau_{i,k}'$ \tcc*{which are $n+1$ series of successive failure times that have already been generated based on a given base repair model $\lambda_t^{(N_{s,t})}$ with initial failure intensity function $\frac{\delta}{n}\lambda(t)$ and each series have $N$ data, where $i=1,2, \dots, n, n+2$ and $k=1,2, \dots, N$. That is, $\tau_{i,k}'$ is the $k$ failure time point of the $i$th component.}
			\end{itemize}}
			\KwResult{Simulated data: $t^{(1)}, t^{(2)}, \dots, t^{(N)}$.}
			Sort $\tau_{i,k}'$ according to the first column (i.e., $k=1$ and $i \in \{1,2,\dots,n\}$) in ascending order and denote the sorted matrix as $\tau_{i,k}$\;
			$s \gets 0$\;
			$\tau^*_0  \gets  0$\;
			$\tau_0  \gets  0$\;
			\tcc{\small The For-loop below is to generate data based on $\Phi_3$, or simulates an NHPP with cumulative intensity function $\Lambda_1(t)$}
			\For{$k=0; k \le N$}{
				Draw $s_{k+1} \sim U(0,1)$\;
				$\tau^*_{k+1}  \gets  \tau^*_{k}  -\log(s_{k+1})$\;
				$\tau_{n+1,k+1}  \gets  \inf\{v: \Lambda_1(v) \ge \tau^*_{k+1}\}$\;
				%      Return $\tau_{n+1,k}$
			}
			$i_1  \gets 1$\;
			$i_2  \gets  1$\; $t^{(1)}  \gets  \min\{\tau_{1,1},\tau_{n+1,1},\tau_{n+2,1}\}$\;
			\tcc{The following For-loop generates random numbers based on $\Phi_1 + \Phi_2+ \Phi_3$ for  $N_{s,t} \le n$}
			\For{$2 \le i_1 \le n$}{
				\eIf{$t^{(i_1)}= \tau_{n+1,i_2}$}{
					$i_2  \gets i_2+1$\;  $\displaystyle{t^{(i_1)}  \gets  \min\{\tau_{i_1,1},\tau_{n+1,i_2},\tau_{n+2,i_1}\}}$\;
				}{
					$\displaystyle{t^{(i_1)}  \gets  \min\{\tau_{i_1,1},\tau_{n+1,i_2},\tau_{n+2,i_1}\}}$;
				}
				%    Return $t^{(i_1)}$,
			}
			\tcc{The following For-loop generates random numbers based on $\Phi_1 + \Phi_2+ \Phi_3$ for  $N_{s,t} > n$}
			\For{$i_1 \in \{n+1, \dots, N\}$}{
				$S_0  \gets  \{1,2,\dots,n\}$\;
				\For{$i_0 \in S_0$}{
					$t_{i_0}'  \gets  \tau_{i_0,i_1-i_0}$\;
					$i_2  \gets  i_2+1$\;  \eIf{$t^{(i_1)}= \tau_{n+1,i_2}$}{$\displaystyle{t^{(i_1)}  \gets \min_{i_0 \in S_0}\{t_{i_0}',\tau_{n+1,i_2},\tau_{n+2,i_1}\}}$; }{ $\displaystyle{t^{(i_1)}  \gets  \min_{i_0 \in S_0}\{t_{i_0}',\tau_{n+1,i_2},\tau_{n+2,i_1}\}}$\;
					}
					%    Return $t^{(i_1)}$.
				}
			}
			\caption{Simulation of Model \eqref{eq:MyModel-0}}\label{Simulation}
		\end{algorithm}
	\end{singlespace}
}
\section{Simulation examples}\label{sec:SimulationExamples}
Figures \ref{fig:SGRP}, \ref{fig:MyModel}, \ref{fig:SGRP_MyModel1} and Fig \ref{fig:SGRP_MyModel2} show the simulation results, where $\lambda(t)=\frac{1.3}{40}\left(\frac{t}{40}\right)^{0.3}$, the number of components in a system is $n$, and the number of failures is set to be $N=200,000$. Each curve in the figures illustrates the pattern of failure rates. The failure rate is defined as the ratio of the number of failures in 1000 units of time to 1000. In all of the figures, the failure process $\lambda_t^{(k)}$ is assumed to be based on ARA$_1$ from Eq. \eqref{eq:ARA-m}.

Figure \ref{fig:SGRP} shows three cases when the causes of failures are available, i.e., the failure data are unmasked. It is constructed based on Eq. \eqref{eq:ARA-m} in which $n=100$ and the values of $\rho$ are 0.3, 0.6, and 0.9, respectively. Figure \ref{fig:MyModel} shows six cases that are generated based on the algorithm in Table \ref{Simulation}. $n=100$ in all curves in this figure. By comparing Figures \ref{fig:SGRP} and \ref{fig:MyModel},  one can see that they share similar patterns.  From Figure \ref{fig:MyModel}, it can be seen that the failure rates  increase when $\delta$ decreases.

The two upper curves and the two lower curves in Fig. \ref{fig:SGRP_MyModel1} and Fig. \ref{fig:SGRP_MyModel2} are generated by setting $\delta=0$ and $\delta=1$ in Eq. \eqref{eq:MyModel-0}, respectively, and they correspond to the lower bounds and upper bounds, respectively. The two curves between the lower and upper curves in Figure \ref{fig:SGRP_MyModel1} and Figure  \ref{fig:SGRP_MyModel2} are generated by the SGRP, based on Eq. \eqref{eq:ARA-m}.  $\rho=0.3$ in Figure \ref{fig:SGRP_MyModel1} and $\rho=0.4$ in Figure  \ref{fig:SGRP_MyModel2}. It can be seen that the rates from Eq. \eqref{eq:ARA-m} are smaller than the upper bounds and larger than the lower bounds, respectively,  which confirms Proposition \ref{prop:proposition-2}, respectively.
\begin{figure}[!ht]
	\includegraphics[width=0.6\linewidth]{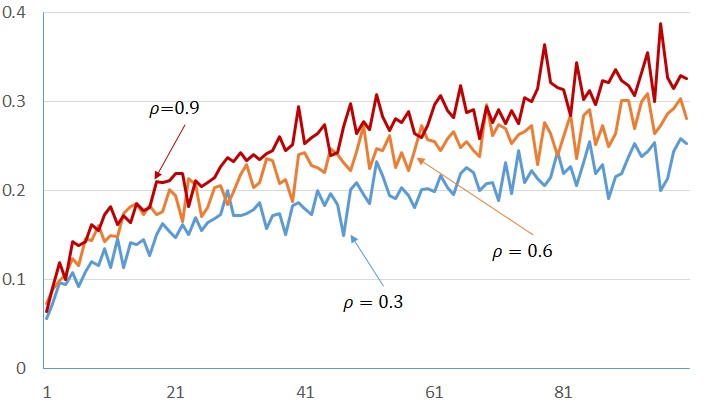}
	\caption{\; The three curves are generated based on the SGRP model with its failure intensity function shown in Eq. \eqref{eq:ARA-m}.}\label{fig:SGRP}
\end{figure}
\begin{figure}[!ht]
	\includegraphics[width=0.6\linewidth]{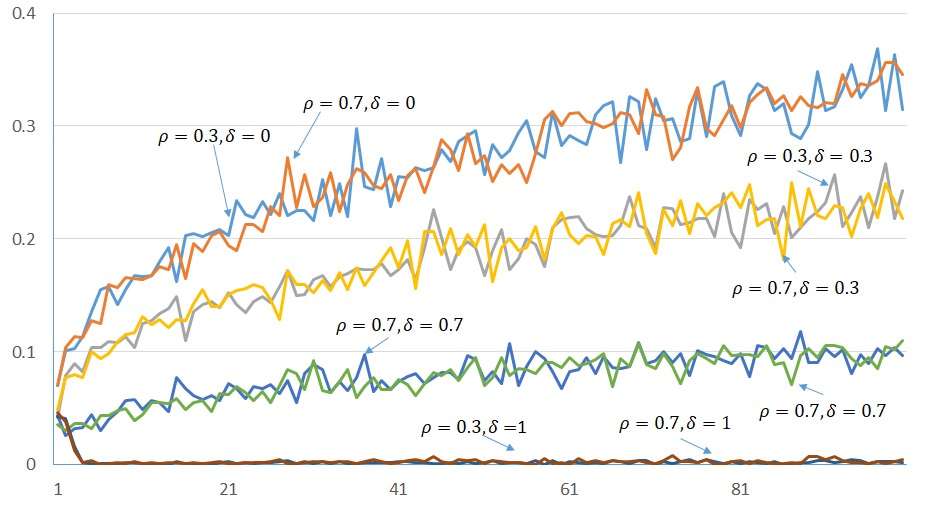}
	\caption{\; Failure intensity curves generated based on the failure intensity functions in Eq. \eqref{eq:MyModel-0}.}\label{fig:MyModel}
\end{figure}

\begin{figure}[!h]
	\includegraphics[width=0.6\linewidth]{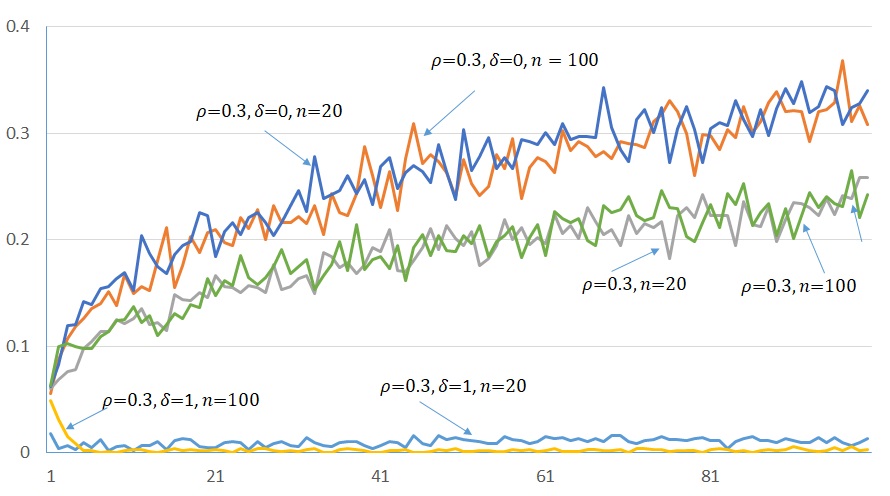}
	\caption{\; The curves with $\rho=0.3$.}\label{fig:SGRP_MyModel1}
\end{figure}
\begin{figure}[!ht]
	\includegraphics[width=0.6\linewidth]{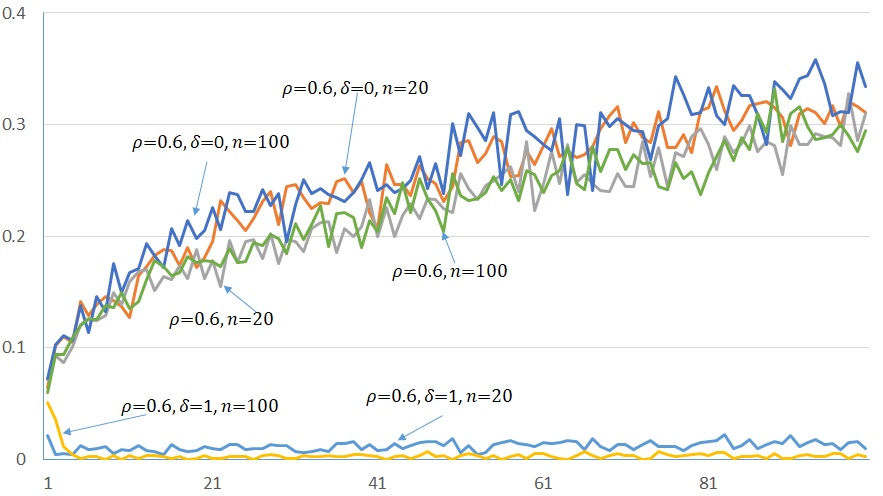}
	\caption{\; The curves with $\rho=0.6$.}\label{fig:SGRP_MyModel2}
\end{figure}

%\section{Case studies on the proposed model}
\section{Conclusions}
\label{sec:conclusions}
This paper derived the lower and upper bounds of the rate of the superimposed renewal process (SRP) and then extended to the superposition of the generalized renewal processes (SGRP). It proposed a linear combination of the bounds to approximate the SGRP. The method of simulating SGRP was also proposed. 

The weighting factor (i.e., $\delta$) was assumed to be deterministic. A possible extension is to assume it to be a random variable with its probability distribution supported on a bounded interval. This will be investigated in our future work.

\end{document}